\documentclass[11pt,a4paper]{article}
\usepackage[utf8]{inputenc}
\usepackage{units}
\usepackage{amsmath}
\usepackage{amsthm}
\usepackage{stmaryrd}
\usepackage{authblk}
\usepackage[margin=3cm]{geometry}

\makeatletter

\theoremstyle{plain}
\newtheorem{thm}{\protect Theorem}
\theoremstyle{definition}
\newtheorem{defn}{\protect Definition}

\title{Quantum Mechanics is Compatible with Counterfactual Definiteness}

\author[1]{Janne V. Kujala}
\author[2]{Ehtibar {N.}~Dzhafarov}

\affil[1]{Department of Mathematics and Statistics, 
 University of Turku, Finland; jvk@iki.fi}
\affil[2]{Department of Psychological Sciences, Purdue University, USA; ehtibar@purdue.edu}

\date{September 21, 2023}

\makeatother

\begin{document}
\maketitle

\abstract{Counterfactual definiteness (CFD) means that if some property is
measured in some context, then the outcome of the measurement would
have been the same had this property been measured in a different
context. A context includes all other measurements made together with
the one in question, and the spatiotemporal relations among them.
The proviso for CFD is non-disturbance: any physical influence of
the contexts on the property being measured is excluded by the laws
of nature, so that no one measuring this property has a way of ascertaining
its context. It is usually claimed that in quantum mechanics CFD does
not hold, because if one assigns the same value to a property in all
contexts it is measured in, one runs into a logical contradiction,
or at least contravenes quantum theory and experimental evidence.
We show that this claim is not substantiated if one takes into account
that only one of the possible contexts can be a factual context, all
other contexts being counterfactual. With this in mind, any system
of random variables can be viewed as satisfying CFD. The concept of
CFD is closely related to but distinct from that of noncontextuality,
and it is the latter property that may or may not hold for a system,
in particular being contravened by some quantum systems.

\emph{Keywords:} contextuality; counterfactual definiteness; strong consistent connectedness.}

\section{Introduction}

A measurement has three characteristics. One is the measurement's
\emph{content}: this is the question the measurement answers, or equivalently,
the physical property whose value the measurement determines. The
second characteristic of a measurement is its \emph{context}: this
includes other measurements made together with this one, and the spatiotemporal
relations among them. The ``togetherness'' of two measurements means
that there is an empirical rule by which the outcomes of these measurements
are paired. The third characteristic of a measurement is the \emph{probability
distributio}n of its values. More precisely, all the measurements
made in the same context possess a \emph{joint distribution} which
determines the distribution of any given measurement.

\emph{Counterfactual definiteness} (CFD) of a measurement is its compliance
with the following counterfactual statement: had the measurement with
the same content been made in another context, its outcome would have
been the same. We will argue that under an assumption commonly accepted
in quantum mechanics, CFD is always satisfied.

We begin by illustrating the terms and notions mentioned above (and
to be formally defined in Section~\ref{sec:Formal-treatment}) using
a toy example. It is based on the parable of the seer of Nineveh that
was introduced by Ernst Specker \cite{Specker1960} and subsequently
used by others as a simple example of contextuality \cite{Liangetal.2011}.
Omitting the colorful story line, the seer of Nineveh had three boxes
obeying the following Magic Box Rules: (MBR1) only two of them could
be opened at any given time; (MBR2) regardless of which two boxes
were opened, one and only one of them contained a gem, and (MBR3)
the gem could be contained in either of the two with equal probabilities.
We can take these Magic Box Rules as an analogue of the laws of quantum
mechanics. A formal representation of this situation is by the following
\emph{system of random variables}:
\begin{equation}
\begin{array}{|c|c|c||c|}
\hline R_{1}^{1} & R_{2}^{1} &  & c=1\\
\hline  & R_{2}^{2} & R_{3}^{2} & c=2\\
\hline R_{1}^{3} &  & R_{3}^{3} & c=3\\
\hline\hline q=1 & q=2 & q=3 & \textnormal{system }\mathcal{R}_{0}
\\\hline \end{array}\label{eq:magic}
\end{equation}
It represents six measurements $R_{q}^{c}$ having three \emph{contents}
$q$ and made pairwise in three \emph{contexts} $c$. The content
$q$ of $R_{q}^{c}$ can be thought of as the question ``does the
box $q$ contain a gem?'' ($q=1,2,3$). This question is answered
$Yes$ or $No$, which are the possible values of $R_{q}^{c}$. Equivalently,
we can view $q$ as the property ``the contents of the box $q$''
($q=1,2,3$), in which case the possible values of $R_{q}^{c}$ are
``gem'' and ``no gem.'' Irrespective, we will denote the values
of $R_{q}^{c}$ as $+1$ and $-1$. The context $c$ indicates which
two boxes have been opened. All other conditions under which the measurement
are made (e.g., the shape of the boxes) are the same in the three
contexts, so we do not list them in the definition of a context. Contexts
are always mutually exclusive, by definition: random variables measured
in different conditions never co-occur, there is no empirical rule
for pairing the values of, say, $R_{1}^{1}$ and $R_{3}^{2}$.

The rules MBR2 and MBR3 say that in each of the three contexts, the
two measurements $R_{q}^{c}$ and $R_{q'}^{c}$ made in it have the
joint distribution 
\begin{equation}
\begin{array}{|c|c|c||c|}
\hline  & R_{q}^{c}=+1 & R_{q}^{c}=-1 & \\
\hline R_{q'}^{c}=+1 & 0 & \nicefrac{1}{2} & \nicefrac{1}{2}\\
\hline R_{q'}^{c}=-1 & \nicefrac{1}{2} & 0 & \nicefrac{1}{2}\\
\hline\hline  & \nicefrac{1}{2} & \nicefrac{1}{2} & 
\\\hline \end{array}
\end{equation}
The individual distribution of each $R_{q}^{c}$, by MBR3, is one
and the same for all random variables: $+1$ and $-1$ with equal
probabilities. In particular, it is the same for any two measurements
answering the same question in different contexts, such as $R_{2}^{1}$
and $R_{2}^{2}$. If one repeatedly observes openings of the box $q=2$,
and sees no other boxes, one has no way of determining in which context
the box is being opened, in the one with the box $q=1$ or with the
box $q=3$. This is a special case of an assumption we are going to
make throughout this paper: in the quantum mechanical literature it
is known under variety of names, such as \emph{non-signaling} or \emph{non-disturbance}
\cite{Ramanathanetal.2012,Cereceda,PR1994}. In Section~\ref{sec:Formal-treatment},
we will define a strong version of this notion following Abramsky
and Brandenburger \cite{AbramskyBrandenburger(2011)}.

\subsection{Noncontextual Representation of Variables }

A standard way of introducing the notion of contextuality, applying
it to our example, is to ask: 
\begin{enumerate}
\item[{Q0:}] is it possible to treat all random variables in the system
as if any two variables with the same content were identical? 
\end{enumerate}
For instance, $R_{1}^{1}$ and $R_{1}^{3}$ have the same content,
both answer the question $q=1$: ``Does the box \#1 contain a gem?''.
The other boxes opened together with the box \#1 (i.e., the box \#2
in context $c=1$ or the box \#3 in context $c=3$) in no way affect
the distribution of the possible answers to the question $q=1$. Someone
who observes the box \#1 repeatedly, without seeing the other boxes,
has no way of determining the context of the box \#1 when it is opened.
Therefore it seems it should be possible to simply view $R_{1}^{1}$
and $R_{1}^{3}$ as one and the same variable. Analogous reasoning
applies to other pairs of measurements sharing a content, $\left(R_{2}^{1},R_{2}^{2}\right)$
and $\left(R_{3}^{2},R_{3}^{3}\right)$. 

However, it is easy to see that this \emph{noncontextual representation}
of the variables in our example is not possible. Let us begin by renaming
$R_{1}^{1}$ into $X$, and then proceed by identifying other random
variables following the Magic Box Rules and noncontextual representation.
The first step will yield 
\begin{equation}
\begin{array}{|c|c|c||c|}
\hline X &  &  & c=1\\
\hline  &  &  & c=2\\
\hline  &  &  & c=3\\
\hline\hline q=1 & q=2 & q=3 & \mathcal{R}_{0}
\\\hline \end{array}\Rightarrow\begin{array}{|c|c|c||c|}
\hline X & -X &  & c=1\\
\hline  &  &  & c=2\\
\hline X &  &  & c=3\\
\hline\hline q=1 & q=2 & q=3 & \mathcal{R}_{0}
\\\hline \end{array}
\end{equation}
where the $-X$ in the first row follows MBR2, and the second $X$
in the first column follows noncontextual representability. Proceeding
in this manner, we obtain
\begin{equation}
\begin{array}{ccc}
\begin{array}{|c|c|c||c|}
\hline X & -X &  & c=1\\
\hline  &  &  & c=2\\
\hline X &  &  & c=3\\
\hline\hline q=1 & q=2 & q=3 & \mathcal{R}_{0}
\\\hline \end{array} & \Rightarrow & \begin{array}{|c|c|c||c|}
\hline X & -X &  & c=1\\
\hline  & -X &  & c=2\\
\hline X &  & -X & c=3\\
\hline\hline q=1 & q=2 & q=3 & \mathcal{R}_{0}
\\\hline \end{array}\\
 &  & \Downarrow\\
 &  & \begin{array}{|c|c|c||c|}
\hline X & -X &  & c=1\\
\hline  & -X & ? & c=2\\
\hline X &  & -X & c=3\\
\hline\hline q=1 & q=2 & q=3 & \mathcal{R}_{0}
\\\hline \end{array}
\end{array}
\end{equation}
and we see that the cell with ``?'' cannot be filled, as it should
be $-X$ to maintain noncontextual representation in the third column
but it should be $X$ to follow MBR2 in the second row. The conclusion
is that no noncontextual representation of the random variables in
our system exists. When this happens, a system is said to be\emph{
contextual} (otherwise it is \emph{noncontextual}). 

\subsection{Counterfactual Definiteness}

One can, however, approach our system in a different way. Given that
a box was opened in some context (that we will call the \emph{factual
context}), one can ask: had this box been opened in another context
(called \emph{counterfactual}), would the outcome have been the same?
The term ``outcome'' has two meanings: ``random variable'' and
``value of a random variable.'' In the present context, however,
the two are interchangeable, and the counterfactual question can also
be formulated thus: had this box been opened in another context, would
the counterfactual variable $R'$ have been representable by the same
random variable as the factual one, $R$? The reason for this is that
we can think of the counterfactual question about values of the variables
being asked repeatedly, and $R'$ and $R$ can always have the same
value if and only if $R'=R$.

In the contextuality literature the counterfactual question above
is considered to be logically equivalent to Q0 \cite{Dzh2019,Peres1978,Stapp1998,Svozil2009,Stapp1971,CS1981}.
However, a detailed analysis shows this is not the case. Using our
example (\ref{eq:magic}), consider the situation when the factual
context is $c=2$, i.e.\ we observe the values of the \emph{factual
variables} $R_{2}^{2}$ and $R_{3}^{2}$. One can then ask two counterfactual
questions: 
\begin{enumerate}
\item[{Q1:}] if instead of $R_{2}^{2}$ one had recorded $R_{2}^{1}$
(the same box in context $c=1$), would $R_{2}^{1}$ have been the
same as $R_{2}^{2}$? 
\item[{Q2:}] if instead of $R_{3}^{2}$ one had recorded $R_{3}^{3}$
(the same box in context $c=3$), would $R_{3}^{3}$ have been the
same as $R_{3}^{2}$? 
\end{enumerate}
It is easy to see that in our example the answer to both questions
is affirmative, in the sense that there is nothing in the Magic Box
Rules that would prevent one from considering a counterfactual variable
identical to the corresponding factual one. Denoting $R_{2}^{2}$
by $X$, we have, for Q1,

\begin{equation}
\begin{array}{|c|c|c||c|}
\hline  &  &  & c=1\\
\hline  & X & -X & \boxed{c=2}\vphantom{\begin{array}{c}
\\
\\
\end{array}}\\
\hline  &  &  & c=3\\
\hline\hline q=1 & q=2 & q=3 & \mathcal{R}_{0}
\\\hline \end{array}\Rightarrow\begin{array}{|c|c|c||c|}
\hline  & X &  & c=1\\
\hline  & X & -X & \boxed{c=2}\vphantom{\begin{array}{c}
\\
\\
\end{array}}\\
\hline  &  &  & c=3\\
\hline\hline q=1 & q=2 & q=3 & \mathcal{R}_{0}
\\\hline \end{array}
\end{equation}
Moreover, this representation can never come into a conflict with
other variables in the same counterfactual context: 
\begin{equation}
\begin{array}{|c|c|c||c|}
\hline  & X &  & c=1\\
\hline  & X & -X & \boxed{c=2}\vphantom{\begin{array}{c}
\\
\\
\end{array}}\\
\hline  &  &  & c=3\\
\hline\hline q=1 & q=2 & q=3 & \mathcal{R}_{0}
\\\hline \end{array}\Rightarrow\begin{array}{|c|c|c||c|}
\hline -X & X &  & c=1\\
\hline  & X & -X & \boxed{c=2}\vphantom{\begin{array}{c}
\\
\\
\end{array}}\\
\hline  &  &  & c=3\\
\hline\hline q=1 & q=2 & q=3 & \mathcal{R}_{0}
\\\hline \end{array}
\end{equation}
We can repeat the same reasoning for Q2: 
\begin{equation}
\begin{array}{ccc}
\begin{array}{|c|c|c||c|}
\hline  &  &  & c=1\\
\hline  & X & -X & \boxed{c=2}\vphantom{\begin{array}{c}
\\
\\
\end{array}}\\
\hline  &  &  & c=3\\
\hline\hline q=1 & q=2 & q=3 & \mathcal{R}_{0}
\\\hline \end{array} & \Rightarrow & \begin{array}{|c|c|c||c|}
\hline  &  &  & c=1\\
\hline  & X & -X & \boxed{c=2}\vphantom{\begin{array}{c}
\\
\\
\end{array}}\\
\hline  &  & -X & c=3\\
\hline\hline q=1 & q=2 & q=3 & \mathcal{R}_{0}
\\\hline \end{array}\\
 &  & \Downarrow\\
 &  & \begin{array}{|c|c|c||c|}
\hline  &  &  & c=1\\
\hline  & X & -X & \boxed{c=2}\vphantom{\begin{array}{c}
\\
\\
\end{array}}\\
\hline X &  & -X & c=3\\
\hline\hline q=1 & q=2 & q=3 & \mathcal{R}_{0}
\\\hline \end{array}
\end{array}
\end{equation}
We have here a special case of the general theorem proved in the next
section: it says that if a system satisfies the no-disturbance condition,
then for any factual context and any counterfactual one, the variables
in the latter can be chosen so that the same-content variables in
the two contexts are identical. That is, any system with no disturbance
has the property of CFD.

Returning to our example, however, we have a natural question to ask:
What if the questions Q1 and Q2 are answered together? Would we not run
into a contradiction then? What we have is
\begin{equation}
\begin{array}{ccc}
\begin{array}{|c|c|c||c|}
\hline  &  &  & c=1\\
\hline  & X & -X & \boxed{c=2}\vphantom{\begin{array}{c}
\\
\\
\end{array}}\\
\hline  &  &  & c=3\\
\hline\hline q=1 & q=2 & q=3 & \mathcal{R}_{0}
\\\hline \end{array} & \Rightarrow & \begin{array}{|c|c|c||c|}
\hline  & X &  & c=1\\
\hline  & X & -X & \boxed{c=2}\vphantom{\begin{array}{c}
\\
\\
\end{array}}\\
\hline  &  & -X & c=3\\
\hline\hline q=1 & q=2 & q=3 & \mathcal{R}_{0}
\\\hline \end{array}\\
 &  & \Downarrow\\
 &  & \begin{array}{|c|c|c||c|}
\hline -X & X &  & c=1\\
\hline  & X & -X & \boxed{c=2}\vphantom{\begin{array}{c}
\\
\\
\end{array}}\\
\hline X &  & -X & c=3\\
\hline\hline q=1 & q=2 & q=3 & \mathcal{R}_{0}
\\\hline \end{array}
\end{array}\label{eq:Q1Q2}
\end{equation}
And it seems that we indeed have run into a contradiction, because
in the first column the variables are not the same. However, one can
notice this only if one compares two counterfactual contexts to each
other with the purpose of determining if they comply with noncontextual
representability. In other words, one notices this contradiction if
the question one answers is Q0 rather than Q1 and Q2. 

We already know that the system is contextual, i.e., Q0 is answered
in the negative. What we should be interested in now is whether a
contradiction occurs if we deal only with the counterfactual questions,
without explicitly involving noncontextual representability. It is
clear, however, that one cannot formulate purely counterfactual questions
to compare two counterfactual contexts without making one of them
factual. It is logically impossible.

The principal difference between noncontextual representability and
CFD is that the latter puts the system into a frame of reference formed
by the choice of a factual context. Changing the factual context changes
the frame of reference. In this picture, noncontextual representability
can be viewed as the possibility of reconciling all different frames
of reference. However, this is an additional and different question
--- about contextuality. A system may be contextual or noncontextual,
but CFD is satisfied always.

\section{\label{sec:Formal-treatment}Formal Treatment of Contextuality and
Counterfactual Definiteness}

\subsection{Basic Notions}

We begin by defining the notions discussed in the previous section
in a more rigorous way. The terminology and notation we use are those
developed in the Contextuality-by-Default (CbD) approach to contextuality
\cite{Dzh2022,DK2016}. Although we have presented an example of
a system of random variables in the opening section, it was a very
specially constructed system (uniform dichotomous distributions and
perfect anticorrelations in each context). We think therefore it is
useful to provide additional illustrations using an example of a more
generic variety:
\begin{equation}
\begin{array}{|c|c|c|c|c||c|}
\hline R_{1}^{1} & R_{2}^{1} & R_{3}^{1} &  &  & c=1\\
\hline  & R_{2}^{2} & R_{3}^{2} & R_{4}^{2} &  & c=2\\
\hline R_{1}^{3} &  & R_{3}^{3} &  &  & c=3\\
\hline R_{1}^{4} & R_{2}^{4} & R_{3}^{4} &  & R_{5}^{4} & c=4\\
\hline\hline q=1 & 2 & 3 & 4 & 5 & \mathcal{R}_{1}
\\\hline \end{array}\label{eq:example system}
\end{equation}
In parallel, we will also use for illustrations a realisitic example,
the system of random variables for which John Bell and others derived
the celebrated inequalities bearing his name \cite{Bell1966,CHSH1969,Fine1982}:
\begin{equation}
\begin{array}{|c|c|c|c||c|}
\hline R_{1}^{1} & R_{2}^{1} &  &  & c=1\\
\hline  & R_{2}^{2} & R_{3}^{2} &  & c=2\\
\hline  &  & R_{3}^{3} & R_{4}^{3} & c=3\\
\hline R_{1}^{4} &  &  & R_{4}^{4} & c=4\\
\hline\hline q=1 & 2 & 3 & 4 & \mathcal{R}_{2}
\\\hline \end{array}\label{eq:example EPR/B}
\end{equation}
This system describes the EPR/Bohm experiment \cite{BohmAharonov1957}
with two entangled spin-$\nicefrac{1}{2}$ particles whose spins are
measured by two respective spacelike-separated experimenters traditionally
designated as Alice and Bob. The contents $q=1$ and $q=3$ designate
the settings (axes) that may be chosen by Alice, and Bob's settings
are designated by $q=2$ and $q=4$. Mathematically, system $\mathcal{R}_{2}$
is less interesting than system $\mathcal{R}_{1}$ (the former being
essentially of the same structure as our opening toy example). However,
$\mathcal{R}_{2}$ has the distinction of having dominated the discussions
related to contextuality (in the form of nonlocality) in the literature
on the foundations of quantum mechanics.

In complete generality, a system of random variables is an indexed
set 
\begin{equation}
\mathcal{R}=\left\{ R_{q}^{c}:q\in Q,\,c\in C,\,q\prec c\right\} ,\label{eq:system}
\end{equation}
where $Q$ and $C$ are sets of contents and contexts, respectively,
and $q\prec c$ indicates that $q$ is measured in $c$, with the
outcome $R_{q}^{c}$ a random variable. In each context the variables
possess a joint distribution, whereas there are no joint distributions
across the contexts.

The notion defined next uses a CbD term for what is usually referred
to as non-disturbance or non-signaling, understood in the strong sense
of the term formalized by Abramsky and Brandenburger \cite{AbramskyBrandenburger(2011)}.
\begin{defn}
A system $\mathcal{R}$ in (\ref{eq:system}) is \emph{strongly consistently
connected} (s.c.c.)\ if, for any $Q'\subseteq Q$ and any $c\in C$
such that $q\prec c$ for all $q\in Q'$, the joint probability distribution
of $\left\{ R_{q}^{c}:q\in Q'\right\} $ only depends on $Q'$ \cite{DK2016}.
\end{defn}
We will assume that both our example systems, (\ref{eq:example system})
and (\ref{eq:example EPR/B}), are s.c.c. (for system $\mathcal{R}_{2}$
this follows from the spacelike separation of Alice and Bob). In $\mathcal{R}_{1}$,
choosing $Q'$ as $\left\{ 2,3\right\} $, we have the identically
distributed pairs $\left\{ R_{2}^{1},R_{3}^{1}\right\} $, $\left\{ R_{2}^{2},R_{3}^{2}\right\} $,
and $\left\{ R_{2}^{4},R_{3}^{4}\right\} $. In $\mathcal{R}_{2}$,
choosing again $Q'$ as $\left\{ 2,3\right\} $, we have identically
distributed $R_{3}^{2}$ and $R_{3}^{3}$.
\begin{defn}
\label{def:s.c.c. noncontextual}An s.c.c.-system $\mathcal{R}$ in
(\ref{eq:system}) is \emph{noncontextual} if there is a jointly distributed
set of random variables $S=\left\{ S_{q}:q\in Q\right\} $, such that,
for any $Q'\subseteq Q$ and any $c\in C$ with $q\prec c$ for all
$q\in Q'$, the distribution of $\left\{ R_{q}^{c}:q\in Q'\right\} $
is the same as the distribution of $\left\{ S_{q}:q\in Q'\right\} $.
The set of variables $S$ is referred to as a \emph{reduced coupling}
of $\mathcal{R}$ \cite{DK2016Handbook}.
\end{defn}
Applying this definition to  system $\mathcal{R}_{1}$ in (\ref{eq:example system}),
it is noncontextual if one can find five jointly distributed variables
$\left\{ S_{1},\ldots,S_{5}\right\} $ such that 
\begin{equation}
\left\{ R_{1}^{1},R_{2}^{1},R_{3}^{1}\right\} \overset{d}{=}\left\{ S_{1},S_{2},S_{3}\right\} ,\quad\left\{ R_{2}^{2},R_{3}^{2},R_{4}^{2}\right\} \overset{d}{=}\left\{ S_{2},S_{3},S_{4}\right\} ,\quad\text{etc.},
\end{equation}
with $\overset{d}{=}$ standing for equality of distributions. For
system $\mathcal{R}_{2}$, noncontextuality means the existence of
four jointly distributed variables $\left\{ S'_{1},\ldots,S'_{4}\right\} $
such that 
\begin{equation}
\left\{ R_{1}^{1},R_{2}^{1}\right\} \overset{d}{=}\left\{ S_{1},S_{2}\right\} ,\quad\left\{ R_{2}^{2},R_{3}^{2}\right\} \overset{d}{=}\left\{ S_{2},S_{3}\right\} ,\quad\text{etc}.
\end{equation}
It is well known that this condition is equivalent to 
\begin{equation}
\max\left(\pm\left\langle R_{1}^{1}R_{2}^{1}\right\rangle \pm\left\langle R_{2}^{2}R_{3}^{2}\right\rangle \pm\left\langle R_{3}^{3}R_{4}^{3}\right\rangle \pm\left\langle R_{4}^{4}R_{1}^{4}\right\rangle \right)\leq2,\label{eq:Bell}
\end{equation}
where the maximum is taken over the eight choices of the $\pm$ signs
with odd number of minus signs. (This is the CHSH inequality \cite{CHSH1969},
with all variables' values assumed to be $\pm1$.)

The next notion formalizes the intuitive meaning of the following
statement: The essence of noncontextuality for s.c.c.\ systems is
that all content-sharing random variables can be treated \emph{as
if they were} one and the same variable. 
\begin{defn}
An \emph{identically connected (i.c.)}\ \emph{coupling} of a noncontextual
s.c.c.\ system $\mathcal{R}$ in (\ref{eq:system}) is an indexed
set
\begin{equation}
S^{*}=\left\{ S_{q}^{c}:S_{q}^{c}=S_{q},\,q\in Q,\,c\in C,\,q\prec c\right\} ,
\end{equation}
where $S_{q}$ is an element of the reduced coupling of the system.
\end{defn}
Thus, in our two examples, (\ref{eq:example system}) and (\ref{eq:example EPR/B}),
if the systems $\mathcal{R}_{1}$ and $\mathcal{R}_{2}$ are noncontextual,
then their variables can be viewed \emph{as if they were}, respectively,
\begin{equation}
\begin{array}{|c|c|c|c|c||c|}
\hline S_{1} & S_{2} & S_{3} &  &  & c=1\\
\hline  & S_{2} & S_{3} & S_{4} &  & c=2\\
\hline S_{1} &  & S_{3} &  &  & c=3\\
\hline S_{1} & S_{2} & S_{3} &  & S_{5} & c=4\\
\hline\hline q=1 & 2 & 3 & 4 & 5 & \widetilde{\mathcal{R}}_{1}\vphantom{\begin{array}{c}
\\
\\
\end{array}}
\\\hline \end{array}\mathrm{\:,\:\begin{array}{|c|c|c|c||c|}
\hline S'_{1} & S'_{2} &  &  & c=1\\
\hline  & S'_{2} & S'_{3} &  & c=2\\
\hline  &  & S'_{3} & S'_{4} & c=3\\
\hline S'_{1} &  &  & S'_{4} & c=4\\
\hline\hline q=1 & 2 & 3 & 4 & \widetilde{\mathcal{R}}_{2}\vphantom{\begin{array}{c}
\\
\\
\end{array}}
\\\hline \end{array}}
\end{equation}
These systems of variables are i.c.\ couplings of, respectively,
$\mathcal{R}_{1}$ and $\mathcal{R}_{2}$.

\subsection{Factual-Counterfactual Subsystems}

Any context in a system can be chosen and designated as a \emph{factual
context}. The variables recorded in this context are called \emph{factual
variables}. All other contexts and the variables they contain are
referred to as \emph{counterfactual}.
\begin{defn}
Having chosen a factual context, $c=c_{0}$, a subsystem $\mathcal{R}\left\llbracket c_{0}\right\rrbracket $
of the system $\mathcal{R}$ in (\ref{eq:system}) is called a \emph{factual-counterfactual}
(F-CF) subsystem (with respect to $c_{0}$) if it consists of all
variables that share their contents with the factual variables.
\end{defn}
This subsystem, of course, includes the factual variables themselves.
Presented explicitly, the F-CF subsystem of (\ref{eq:system}) with
respect to $c_{0}$ is
\begin{equation}
\mathcal{R}\left\llbracket c_{0}\right\rrbracket =\left\{ R_{q}^{c}:q\in Q,\,c\in C,\,q\prec c,c_{0}\right\} ,
\end{equation}
where $q\prec c,c_{0}$ means $q\prec c$ and $q\prec c_{0}$. Thus,
in the examples (\ref{eq:example system}) and (\ref{eq:example EPR/B}),
if we choose $c=2$ as a factual context in each of them, then the
respective F-CF subsystem will be 
\begin{equation}
\begin{array}{|c|c|c||c|}
\hline R_{2}^{1} & R_{3}^{1} &  & c=1\\
\hline R_{2}^{2} & R_{3}^{2} & R_{4}^{2} & \boxed{c=2}\vphantom{\begin{array}{c}
\\
\\
\end{array}}\\
\hline  & R_{3}^{3} &  & c=3\\
\hline R_{2}^{4} & R_{3}^{4} &  & c=4\\
\hline\hline q=2 & 3 & 4 & \mathcal{R}_{1}\left\llbracket c=2\right\rrbracket 
\\\hline \end{array}\:,\:\begin{array}{|c|c||c|}
\hline R_{2}^{1} &  & c=1\\
\hline R_{2}^{2} & R_{3}^{2} & \boxed{c=2}\vphantom{\begin{array}{c}
\\
\\
\end{array}}\\
\hline  & R_{3}^{3} & c=3\\
\hline\hline q=2 & 3 & \mathcal{R}_{2}\left\llbracket c=2\right\rrbracket 
\\\hline \end{array}\label{eq:F-CF example}
\end{equation}

\subsection{Counterfactual Definiteness}

We are ready now to rigorously formulate CFD in terms of the noncontextuality
of the F-CF subsystems of a system. 
\begin{defn}
\label{def:CFD}An s.c.c.\ system is said to have the property of
\emph{counterfactual definiteness} (CFD) if all its F-CF subsystems
are noncontextual.
\end{defn}
The justification of this definition lies in the intuition formalized
by the notion of an i.c.\ coupling. If the F-CF subsystems in (\ref{eq:F-CF example})
are noncontextual, then their i.c.\ couplings are, respectively,
\begin{equation}
\begin{array}{|c|c|c||c|}
\hline X & Y &  & c=1\\
\hline X & Y & Z & \boxed{c=2}\vphantom{\begin{array}{c}
\\
\\
\end{array}}\\
\hline  & Y &  & c=3\\
\hline X & Y &  & c=4\\
\hline\hline q=2 & 3 & 4 & \widetilde{\mathcal{R}}_{1}\left\llbracket c=2\right\rrbracket 
\\\hline \end{array}\:,\:\begin{array}{|c|c||c|}
\hline X' &  & c=1\\
\hline X' & Y' & \boxed{c=2}\vphantom{\begin{array}{c}
\\
\\
\end{array}}\\
\hline  & Y' & c=3\\
\hline\hline q=2 & 3 & \mathcal{\widetilde{R}}_{2}\left\llbracket c=2\right\rrbracket 
\\\hline \end{array}\label{eq:ic coupling}
\end{equation}
where $\left\{ X,Y,Z\right\} $ and $\left\{ X',Y'\right\} $ are
reduced couplings of the respective F-CF subsystems. It is \emph{as
if} all counterfactual variables were the same as the corresponding
factual ones. Moreover, in each of the counterfactual contexts these
representations of the two F-CF subsystems are compatible with the
overall joint distributions in this context: 
\begin{equation}
\begin{array}{|c|c|c|c|c||c|}
\hline R_{1}^{1} & X & Y &  &  & c=1\\
\hline  & X & Y & Z &  & \boxed{c=2}\vphantom{\begin{array}{c}
\\
\\
\end{array}}\\
\hline R_{1}^{3} &  & Y &  &  & c=3\\
\hline R_{1}^{4} & X & Y &  & R_{5}^{4} & c=4\\
\hline\hline q=1 & 2 & 3 & 4 & 5 & \mathrm{ext}.\widetilde{\mathcal{R}}_{1}\left\llbracket c=2\right\rrbracket 
\\\hline \end{array}\:,\:\begin{array}{|c|c|c|c||c|}
\hline R_{1}^{1} & X' &  &  & c=1\\
\hline  & X' & Y' &  & \boxed{c=2}\vphantom{\begin{array}{c}
\\
\\
\end{array}}\\
\hline  &  & Y' & R_{4}^{3} & c=3\\
\hline\hline q=1 & 2 & 3 & 4 & \mathrm{ext}.\mathcal{\widetilde{R}}_{2}\left\llbracket c=2\right\rrbracket 
\\\hline \end{array}
\end{equation}
where ``ext.'' abbreviates ``extended.'' This is a representation
(coupling) of the system (\ref{eq:example system}) with a factual
context $c=2$, in which any counterfactual variable sharing a content
with a factual one is identical with the latter. This is the intuitive
meaning of CFD.

\subsection{Universality of Counterfactual Definiteness}

It is easy to see by inspecting (\ref{eq:ic coupling}) that the noncontextuality
of this F-CF subsystem did not have to be assumed: e.g., if $\mathcal{R}_{1}$
in (\ref{eq:example system}) is s.c.c., the noncontextuality of $\mathcal{R}_{1}\left\llbracket c=2\right\rrbracket $
in (\ref{eq:F-CF example}) clearly holds by choosing the reduced
coupling $\left\{ X,Y,Z\right\} $ as $\left\{ R_{2}^{2},R_{3}^{2},R_{4}^{2}\right\} $.
The same holds for other three F-CF subsystems of (\ref{eq:example system}),
as well as for the four F-CF subsystems of (\ref{eq:example system}),
which means that the systems $\mathcal{R}_{1}$ and $\mathcal{R}_{2}$
both satisfiy CFD. 
\begin{thm}
\label{thm:F-CF-subsystem}Any s.c.c.\ system satisfies CFD. 
\end{thm}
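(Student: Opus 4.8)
The plan is to show that for any choice of factual context $c_0$, the F-CF subsystem $\mathcal{R}\llbracket c_0\rrbracket$ is noncontextual, exhibiting an explicit reduced coupling built directly from the factual variables themselves. First I would fix $c_0$ and recall that the F-CF subsystem consists of exactly those variables $R_q^c$ for which $q\prec c_0$, i.e.\ the contents appearing in the subsystem form the set $Q_0=\{q\in Q:q\prec c_0\}$, and every such $q$ is measured in the factual context $c_0$. The candidate reduced coupling is $S=\{S_q:q\in Q_0\}$ defined by $S_q=R_q^{c_0}$, which is a genuine jointly distributed family because all the factual variables share the single context $c_0$ and hence possess a joint distribution by the definition of a system in~(\ref{eq:system}).

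The core of the argument is then to verify Definition~\ref{def:s.c.c. noncontextual}: for every $Q'\subseteq Q_0$ and every context $c$ with $q\prec c$ for all $q\in Q'$, the distribution of $\{R_q^c:q\in Q'\}$ must equal the distribution of $\{S_q:q\in Q'\}=\{R_q^{c_0}:q\in Q'\}$. But since every $q\in Q'\subseteq Q_0$ satisfies $q\prec c_0$, the family $\{R_q^{c_0}:q\in Q'\}$ is a legitimate marginal within the factual context. Now I would invoke strong consistent connectedness directly: by the defining property of an s.c.c.\ system, the joint distribution of $\{R_q^{c}:q\in Q'\}$ depends only on $Q'$ and not on the context $c$ in which it is measured. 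Applying this once to the context $c$ and once to the context $c_0$ gives
\begin{equation}
\{R_q^{c}:q\in Q'\}\overset{d}{=}\{R_q^{c_0}:q\in Q'\}=\{S_q:q\in Q'\},
\end{equation}
which is exactly the required distributional identity.

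Since $Q'$ and $c$ were arbitrary, this establishes that $S$ is a reduced coupling of the F-CF subsystem, so $\mathcal{R}\llbracket c_0\rrbracket$ is noncontextual. As $c_0$ ranged over an arbitrary context, all F-CF subsystems of $\mathcal{R}$ are noncontextual, and by Definition~\ref{def:CFD} the system satisfies CFD. I do not anticipate a serious obstacle: the whole argument is essentially a bookkeeping observation that, because every content in an F-CF subsystem is by construction measured in the common factual context, one may use the factual variables themselves as the reduced coupling, and s.c.c.\ then forces all counterfactual marginals to agree. The one point demanding care is verifying the quantifier match between the s.c.c.\ condition (which quantifies over $Q'\subseteq Q$) and the noncontextuality condition for the subsystem (which quantifies over $Q'\subseteq Q_0$); this is immediate since $Q_0\subseteq Q$ and membership of $q$ in $Q_0$ already guarantees $q\prec c_0$, so the factual marginals invoked above are always well defined.
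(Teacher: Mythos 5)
Your proof is correct and follows essentially the same route as the paper's: both take the factual variables $\{R_q^{c_0}:q\prec c_0\}$ as the reduced coupling of the F-CF subsystem and invoke strong consistent connectedness to identify the counterfactual marginals with the factual ones. Your write-up is somewhat more explicit about quantifying over all $Q'\subseteq Q_0$ rather than just the maximal content set of each context, but the underlying argument is the same.
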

\begin{proof}
Let $\mathcal{R}$ in (\ref{eq:system}) be a s.c.c.\ system. We
need to show that any of its F-CF subsystems is noncontextual. Let
$c=c_{0}$ to be a factual context. Then the jointly distributed set
of variables
\begin{equation}
R^{c_{0}}=\left\{ R_{q}^{c_{0}}:q\in Q,\,q\prec c_{0}\right\} 
\end{equation}
is a reduced coupling of $\mathcal{R}\left\llbracket c_{0}\right\rrbracket $.
Indeed, for any $c\in C$,
\begin{equation}
R^{c}=\left\{ R_{q}^{c}:q\in Q,\,q\prec c,c_{0}\right\} \overset{d}{=}\left\{ R_{q}^{c_{0}}:q\in Q,\,q\prec c_{0},c\right\} ,
\end{equation}
because the system is s.c.c.
\end{proof}
To emphasize once again the underlying intuition, it follows from
the theorem that any F-CF subsystem has an i.c.\ coupling
\begin{equation}
S^{*}\left\llbracket c_{0}\right\rrbracket =\left\{ S_{q}^{c}:S_{q}^{c}=R_{q}^{c_{0}},\,q\in Q,\,c\in C,\,q\prec c,c_{0}\right\} .
\end{equation}
In other words, for any counterfactual context $c\not=c_{0}$ and
any $q\prec c_{0},c$, the counterfactual variable $R_{q}^{c}$ can
be treated \emph{as if it were} $R_{q}^{c_{0}}$.

\section{Concluding Remarks and Possible Generalizations}

\subsection{Counterfactual Definiteness and Noncontextuality}

We have shown that compliance with CFD and noncontextuality of a system
of random variables are related but different concepts: CFD is always
satisfied as a consequence of non-disturbance (s.c.c.)\ property,
irrespective of whether the system is contextual. The mathematical
reason for this is that a counterfactual question creates a single
frame of reference, a factual context, which, together with the variables
in the counterfactual contexts to which the question pertains, forms
a noncontextual subsystem. The overall (non)contextuality is a property
without such a frame of reference, or one in which all different frames
of reference are reconciled. However, to achieve such a reconciliation,
or to establish that it is not possible, one has to use conceptual
means that cannot be presented in the form of counterfactual questions.

Interestingly, David Mermin, in his well-known paper \cite{Mermin1989},
makes a distinction between \emph{Strong Baseball Principle}, corresponding
to CFD, and \emph{Very Strong Baseball Principle}, corresponding to
overall noncontextuality. Except for terminological and expository
differences, the Strong Baseball Principle is introduced as in our
example (\ref{eq:Q1Q2}), by choosing a factual context and creating
two counterfactual ones following CFD. Then Mermin compares the two
counterfactual contexts to each other, and says: ``This last application
of the Strong Baseball Principle, by comparing hypothetical cases,
has a different character than the first two, which compare a hypothetical
case with the real one, and here it might more accurately be termed
the Very Strong Baseball Principle.'' Mermin proceeds, however, by
arguing that the latter should not be treated separately from the
Strong Baseball Principle, the argument being that comparing counterfactual
contexts is as ``reasonable'' and ``permissible'' as comparing
them with a factual context. One of the authors of the present paper
maintained the same position in Ref.~\cite{Dzh2019}, using similar
arguments. Our position here is that being equally reasonable and
permissible, CFD and noncontextuality are logically distinct principles,
of which only the latter may fail to hold in s.c.c.\ systems.

We find a similar situation in a 1981 paper by John Clauser and Abner
Shimony \cite{CS1981}. They analyze Henry Stapp's \cite{Stapp1971}
approach to the derivation of Bell's inequalities, according to which
they follow from certain four equations. These equations can be interpreted
as stating that CFD holds for the four F-CF subsystems of the EPR/Bohm
system, i.e. $\mathcal{R}_{2}$ in our example (\ref{eq:example EPR/B}).
Thus, Stapp's position is that whenever $\mathcal{R}_{2}$ is contextual,
it must contravene CFD. Clauser and Shimony point out that Stapp's
equations only apply to the pairs of contexts (F-CF pairs, in our
terminology), and that ``Stapp has not given a reason for demanding
the existence of a quadruple of possible worlds which mesh together
{[}these four pairs{]}.'' However, they seem to accept Stapp's response
to this objection, which we do not present here as, in our view, it
misses the point. Clauser and Shimony's objection is accepted as valid
by Bernard d’Espagnat \cite{dEspagnat1984}, but his position, in
contrast to ours, seems to be skeptical of the meaningfulness of CFD
altogether.

For completeness, we should mention that Robert Griffiths \cite{Griffiths2019,Griffith2012}
also argues that CFD is always satisfied in quantum mechanics. His
argumentation, however, is very different from ours. Moreover, unlike
in our paper, Griffiths considers CFD to be completely unrelated to
noncontextuality.

\subsection{Systems with Signaling/Disturbance}

CbD provides a generalization of the notion of (non)contextuality
to systems that are not necessarily s.c.c. In fact, the distributions
of the random variables in the system can be arbitrary. It is interesting
to see if CFD generalizes similarly, and if so, what the relations
are between generalized (non)contextuality and generalized CFD.
\begin{defn}
A system $\mathcal{R}$ in (\ref{eq:system}) is considered noncontextual
in CbD if it has a \emph{multimaximally connected} coupling, defined
as an indexed set of jointly distributed variables
\begin{equation}
S^{*}=\left\{ S_{q}^{c}:q\in Q,\,c\in C,\,q\prec c\right\} 
\end{equation}
such that, (1) for any $Q'\subseteq Q$ and any $c\in C$ with $q\prec c$
for all $q\in Q'$, the distribution of $\left\{ R_{q}^{c}:q\in Q'\right\} $
is the same as the distribution of $\left\{ S_{q}:q\in Q'\right\} $;
and (2) the probability of $S_{q}^{c}=S_{q}^{c'}$, for any $q\prec c,c'$,
is maximal possible.
\end{defn}
For s.c.c.\ systems, this definition specializes to Definition \ref{def:s.c.c. noncontextual},
with the maximal probability of $S_{q}^{c}=S_{q}^{c'}$ being 1 (because
of which they both can be renamed into $S_{q}$).

One can now apply Definition \ref{def:CFD} to an arbitrary system
$\mathcal{R}$: 
\begin{defn}
A system is said to satisfy generalized CFD (gCFD), if all of its
F-CF subsystems are noncontextual. 
\end{defn}
The intuitive meaning of gCFD is that the outcome of a counterfactual
measurement, had it been made, would have been the same as the factual
one with the highest possible probability (given the distributions
of the two measurements). 

We now have no analogue of Theorem \ref{thm:F-CF-subsystem}. Based
on the general properties of (non)contextuality, one can only say
that (A) if a system is noncontextual then so are all of its F-CF
subsystems (i.e., the system satisfies gCFD); and (B) if a system
does not satisfy gCFD, then it is contextual. Moreover, all s.c.c.\ systems
provide evidence that compliance with gCFD (in the form of CFD) does
not imply noncontextuality. All \emph{cyclic systems} \cite{Araujoetal.2013},
even if not s.c.c. \cite{DKC2020}, demonstrate the same. For instance,
the system $\mathcal{R}_{2}$ in our example (\ref{eq:example EPR/B})
 is contextual whenever
\begin{equation}
\begin{array}{r}
\max\left(\pm\left\langle R_{1}^{1}R_{2}^{1}\right\rangle \pm\left\langle R_{2}^{2}R_{3}^{2}\right\rangle \pm\left\langle R_{3}^{3}R_{4}^{3}\right\rangle \pm\left\langle R_{4}^{4}R_{1}^{4}\right\rangle \right)>2+\left|\left\langle R_{1}^{1}\right\rangle -\left\langle R_{1}^{4}\right\rangle \right|\\
+\left|\left\langle R_{2}^{2}\right\rangle -\left\langle R_{2}^{1}\right\rangle \right|+\left|\left\langle R_{3}^{3}\right\rangle -\left\langle R_{3}^{2}\right\rangle \right|+\left|\left\langle R_{4}^{4}\right\rangle -\left\langle R_{4}^{3}\right\rangle \right|,
\end{array}\label{eq:gBell}
\end{equation}
with the same meaning of the terms as in (\ref{eq:Bell}), which is
a special case of (\ref{eq:gBell}). At the same time, every F-CF
subsystem of this system is trivially noncontextual (because single-variable
rows cannot contribute to contextuality):
\begin{equation}
\begin{array}{cc}
\begin{array}{|c|c||c|}
\hline R_{1}^{1} & R_{2}^{1} & \boxed{c=1}\vphantom{\begin{array}{c}
\\
\\
\end{array}}\\
\hline  & R_{2}^{2} & c=2\\
\hline R_{1}^{4} &  & c=4\\
\hline\hline q=1 & 2 & \mathcal{R}_{2}\left\llbracket c=1\right\rrbracket 
\\\hline \end{array} & \begin{array}{|c|c||c|}
\hline R_{2}^{1} &  & c=1\\
\hline R_{2}^{2} & R_{3}^{2} & \boxed{c=2}\vphantom{\begin{array}{c}
\\
\\
\end{array}}\\
\hline  & R_{3}^{3} & c=3\\
\hline\hline q = 2 & 3 & \mathcal{R}_{2}\left\llbracket c=2\right\rrbracket 
\\\hline \end{array}\\
\\
\begin{array}{|c|c||c|}
\hline R_{3}^{2} &  & c=2\\
\hline R_{3}^{3} & R_{4}^{3} & \boxed{c=3}\vphantom{\begin{array}{c}
\\
\\
\end{array}}\\
\hline  & R_{4}^{4} & c=4\\
\hline\hline q = 3 & 4 & \mathcal{R}_{2}\left\llbracket c=3\right\rrbracket 
\\\hline \end{array} & \begin{array}{|c|c||c|}
\hline R_{1}^{1} &  & c=1\\
\hline  & R_{4}^{3} & c=3\\
\hline R_{1}^{4} & R_{4}^{4} & \boxed{c=4}\vphantom{\begin{array}{c}
\\
\\
\end{array}}\\
\hline\hline q=1 & 4 & \mathcal{R}_{2}\left\llbracket c=4\right\rrbracket 
\\\hline \end{array}
\end{array}
\end{equation}
More work is needed to find out if gCFD and more generally F-CF systems
may productively complement the notion of (non)contextuality in the
theory of systems of random variables. 

\vspace{6pt}

\paragraph*{Author Contributions:}

The initial idea belongs to J.V.K. The authors contributed equally
to all other aspects of the work. All authors have read and agreed to the published version of the manuscript.

\paragraph*{Funding:}
This research was partially supported by the Foundational Questions
Institute (grant FQxI-MGA-2201).

\paragraph*{Data Availability:}
No data is associated with the paper.

\paragraph*{Conflicts of Interest:}
The authors declare no conflict of interest. The sponsors had no
role in the design, execution, interpretation, or writing of the study.

\paragraph*{Abbreviations:}
The following abbreviations are used in this manuscript:\\

\noindent%
\begin{tabular}{@{}ll}
CbD & Contextuality-by-Default (approach to contextuality)\\
CFD & counterfactual definiteness\\
F-CF & factual-counterfactual (subsystem of random variables)\\
gCFD & generalized counterfactual definiteness\\
i.c. & identically connected (coupling)\\
MBR & ``magic box rule''\\
s.c.c. & strongly consistently connected (without disturbance, non-signaling)\\
\end{tabular}


\begin{thebibliography}{999}
\bibitem{Specker1960}Specker, E. Die Logik nicht gleichzeitig entscheidbarer
Aussagen. \emph{{Dialectica}} \textbf{1960}, \emph{14}, 239--246. 

\bibitem{Liangetal.2011}Liang, Y.C.; Spekkens, R.W.; Wiseman, H.M.
Specker’s parable of the overprotective seer: A road to contextuality,
nonlocality and complementarity. \emph{Phys. Rep.} \textbf{2011}, \emph{506}, 1--39. 

\bibitem{Cereceda}Cereceda, J. Quantum mechanical probabilities and
general probabilistic constraints for Einstein--Podolsky--Rosen--Bohm
experiments. \emph{Found. Phys. Lett.} \textbf{2000}, \emph{13}, 427--442.

\bibitem{PR1994}Popescu, S.; Rohrlich, D. Quantum nonlocality as
an axiom. \emph{Found. Phys.} \textbf{1994}, \emph{24}, 379--385.

\bibitem{Ramanathanetal.2012}Ramanathan, R.; Soeda, A.; Kurzynski,
O.; Kaszlikowski, D. Generalized monogamy of contextual inequalities
from the no-disturbance principle. \emph{Phys. Rev. Lett.} \textbf{2012}, \emph{109}, 050404.

\bibitem{AbramskyBrandenburger(2011)}Abramsky, S.; Brandenburger,
A. The sheaf-theoretic structure of non-locality and contextuality.
\emph{New J. Phys.} \textbf{2011}, \emph{13}, \mbox{113036--113075.}

\bibitem{Stapp1971}Stapp, H.P. $S$-Matrix interpretation of quantum theory.
\emph{Phys. Rev. D} \textbf{1971}, \emph{3}, 1303--1320.

\bibitem{Peres1978}Peres, A. Unperformed experiments have no results.
\emph{Am. J. Phys.} \textbf{1978}, \emph{46}, 745--747.

\bibitem{CS1981}Clauser, J.F.; Shimony A. Bell's theorem. Experimental tests and implications.
\emph{Rep. Prog. Phys.}
\textbf{1978}, \emph{41}, 1881.

\bibitem{Stapp1998}Stapp, H.P. Nonlocal character of quantum theory.
\emph{Am. J. Phys.} \textbf{1998}, \emph{65}, 300--304.

\bibitem{Svozil2009}Svozil, K. Quantum scholasticism: On quantum
contexts, counterfactuals, and the absurdities of quantum omniscience.
\emph{Inf. Sci.} \textbf{2009}, \emph{179}, 535--541.

\bibitem{Dzh2019}Dzhafarov, E.N. On joint distributions, counterfactual
values, and hidden variables in understanding contextuality.
\emph{Philos. Trans. R. Soc. A} \textbf{2019}, \emph{377}, 20190144.

\bibitem{DK2016}Dzhafarov, E.N.; Kujala, J.V. Context-content systems
of random variables: The contextuality-by-default theory. \emph{J. Math. Psychol.} \textbf{2016}, \emph{74}, 11--33.

\bibitem{Dzh2022}Dzhafarov, E.N. Contents, contexts, and basics of
contextuality. In \emph{From Electrons to Elephants and Elections, The Frontiers Collection}, Wuppuluri, S.;  Stewart, I., Eds.; Springer: Cham, Switzerland, 2022; pp. 259--286.


\bibitem{Bell1966} Bell, J. On the problem of hidden variables in
quantum mechanics. \emph{Rev. Mod. Phys.} \textbf{1966}, \emph{38}, 447--453.

\bibitem{CHSH1969}Clauser, J.F.; Horne, M.A.; Shimony, A.; Holt,
R.A. Proposed experiment to test local hidden-variable theories. \emph{Phys.
Rev. Lett.} \textbf{1969}, \emph{23}, 880--884.

\bibitem{Fine1982}Fine, A. Hidden variables, joint probability, and
the Bell inequalities. \emph{Phys. Rev. Lett.} \textbf{1982}, \emph{48}, 291--295.

\bibitem{BohmAharonov1957}Bohm, D.; Aharonov, Y. Discussion of experimental
proof for the paradox of Einstein, Rosen and Podolski. \emph{Phys.
Rev.} \textbf{1957}, \emph{108}, 1070--1076.


\bibitem{DK2016Handbook}Dzhafarov, E.N.; Kujala, J.V. Probability,
  random variables, and selectivity. In \emph{New Handbook of Mathematical Psychology}; Batchelder, W., Colonius, H., Dzhafarov, E.N., Myung, J., Eds.;
Cambridge University Press: Cambridge, UK, 
2016; Volume 1; pp. 85--150.

\bibitem{Mermin1989}Mermin, N.D. Can you help your team tonight by
watching on TV? More experimental metaphysics from Einstein, Podolsky,
and Rosen. In \emph{Philosophical Consequences of Quantum Theory: Reflections on Bell’s Theorem}; Cushin, J.T., McMullin, E., Eds.; {University of Notre Dame Press:}  Notre Dame, IN, USA,
1989; pp. 38--59.

\bibitem{dEspagnat1984}d'Espagnat, B. Nonseparability and the tentative
descriptions of reality. \emph{Phys. Rep.} \textbf{1984}, \emph{110}, 201--264.

\bibitem{Griffith2012}Griffiths, R.B. Quantum Counterfactuals and
Locality. \emph{Found. Phys.} \textbf{2012}, \emph{42}, 674--684.

\bibitem{Griffiths2019}Griffiths, R.B. Quantum measurements and contextuality.
\emph{Philos. Trans. R. Soc. A} \textbf{2019}, \emph{377}, 20190033.

\bibitem{Araujoetal.2013}Araújo, M.; Quintino, M.T.; Budroni, C.;
Cunha, M.T.; Cabello, A. All noncontextuality inequalities for the
n-cycle scenario. \emph{Phys. Rev. A} \textbf{2013}, \emph{88}, 022118.

\bibitem{DKC2020}Dzhafarov, E.N.; Kujala, J.V.; Cervantes, V.H. Contextuality
and noncontextuality measures and generalized Bell inequalities for
cyclic systems. \emph{Phys. Rev. A} \textbf{2020}, \emph{101}, 042119; {Erratum in} \emph{Phys. Rev. A} \textbf{2020}, \emph{101}, 069902; \emph{Phys. Rev. A} \textbf{2021}, \emph{103},~059901.

\end{thebibliography}
\end{document}